\newcommand{\conv}{\text{conv}}
\newcommand{\eps}{\varepsilon}
\title{On Mergable Coresets for Polytope Distance}
\author{Benwei Shi, Aditya Bhaskara, Wai Ming Tai, and Jeff M. Phillips}{School of Computing, University of Utah}{b.shi@utah.edu, bhaskara@cs.utah.edu, u1008421@utah.edu, jeffp@cs.utah.edu}{}{}
\authorrunning{B. Shi, A. Bhaskara, W. Tai and J.\,M. Phillips}  
\keywords{coresets merging, polytope distance, SVMs, Support Vector Machines, coreset, streaming algorithm, online algorithm}  
\begin{document}
	
\maketitle


\section{Introduction.}
The max-margin linear separator is a classic problem in machine learning~\cite{Burges1998}, defined as follows.  
Given a point set $P \subset \mathbb{R}^d$ with labels $\{-1, +1\}$ find a hyperplane $h$ that separates the labels, which maximizes the \emph{margin} $\gamma = \min_{p \in P} \|p - \pi_h(p)\|$, where $\pi_h(p)$ projects $p$ onto $h$.
This is equivalent to the two-polytope min-distance problem, and can be reduced to the one-polytope min-distance (polytope distance for short) problem~\cite{Gartner2009}. 
Further, a $(1-\eps)$-approximation of polytope distance can be used to obtain a $(1-\eps)$-approximation for max-margin separating hyperplane. 
The former can be solved by finding an $\eps$-coreset---the objective of this paper, defined formally below. 

In this paper we ask if these $\eps$-coresets can be merged~\cite{ACHPWY}.
That is, given two $\eps$-coresets $S_1$ and $S_2$, can they be combined into a single $\eps'$-coreset while not increasing the space (hopefully with $\eps' = \eps$).  
By creating coresets on batches of points in a streaming setting, if we can iteratively merge these coresets, this easily leads to streaming algorithms.  This framework also implies small space and communication complexity in other big data settings.

\section{Concepts and Definitions.}
\setlength\intextsep{0pt}
\begin{wrapfigure}{r}{0pt}
    \vspace{-2cm}
    \includegraphics{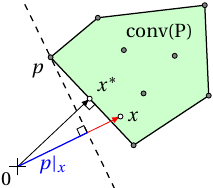}
    \caption{$P$ in gray, and $\conv(P)$ in green.  Point $x \in \conv(P)$ is an $(1-\eps)$-approximation: the red part has length $\eps \|x\|$. }
    \label{fig:eps-approx}
\end{wrapfigure}	
We follow the definition of $\eps$-coreset for polytope distance problem used in G\"artner and Jaggi's paper \cite{Gartner2009}.
Formally, we are given a point set $P \in \mathbb R^d$, we want approximate $x^* = \arg\min_{v \in \conv(P)} \|v\|$, the point in $\conv(P)$ closest to the origin. 
Define $p|_x := \frac{\langle p,x \rangle}{\|x\|}$ as the signed length of the \textit{projection} of $p$ onto the direction of the vector $x$. 
For any $\eps > 0$, $x \in \conv(P)$ is called an \textit{$\eps$-approximation}, iff $(1-\eps)\|x\| \le p|_x, \ \forall p \in P$; see \Cref{fig:eps-approx}.
This approximation is stronger than just requiring the distance $\|x\|$ to be close to the optimal value, $(1 - \eps)\|x\| \le \|x^*\|$. In particular, if $x$ is an $(1-\eps)$-approximation, it implies that
$ (1 - \eps)\|x\| \le \min_{p \in P} p|_x = \min_{v \in \conv(P)} v|_x \le \|x^*\| \le \|x\|$.
A subset $S \subseteq P$ is an \textit{$\eps$-coreset} of $P$ iff $\conv(S)$ contains an $(1-\eps)$-approximation to the distance of $\conv(P)$.

To bound the $\eps$-coreset size, we need some bound on the width of the data $P$.  G\"artner and Jaggi~\cite{Gartner2009} use the \emph{excentricity} of a point set $P$, defined $E = \frac{\mathsf{diam}(\conv(P))^2}{\|x^*\|^2}$.  
An $\eps$-coreset with size no more than $2 \lceil 2 \frac{E}{\eps} \rceil$ always exists~\cite{Gartner2009,Burges1998}, and can be found with a simple greedy (Frank-Wolfe) algorithm. 
In this paper we use the \emph{angular diameter} $\theta$ instead of the excentricity $E$; it is defined as the maximum angle between any two vectors (points) from $P$.
While incomparable to excentricity, this property allows us to provide upper and lower bounds on the mergeability of polytope distance coresets.
	
\section{Our results.}
We announce mainly negative results.  
First we show a constant-size $(1-\cos \theta)$-coreset for polytope distance is simple to find and maintain under merges (\Cref{thm:shortest}). However, increasing the size of the coreset cannot significantly improve the error bound (\Cref{thm:coreset,thm:0coreset}); we cannot maintain $\eps$-coresets with arbitrarily small $\eps > 0$ under merges. 

This hardness is not totally unexpected given the known hardness of streaming $(1-\eps)$-approximate minimum enclosing ball~\cite{Sharath+Agarwal}, which would also imply a streaming coreset for max-margin linear separators.  We show that even if we restrict $P$ to have angular diameter at most $\pi/2$ (the margin is relatively large), polytope distance $\eps$-coresets cannot be used to derive a $(1-\eps)$-approximate margin algorithm for max-margin linear separators.  

\subparagraph{Maintaining a simple coreset.}
We first show that the closest point is an $\eps$-coreset with $\eps = 1 - \cos\theta$.  This is trivial to maintain under merges.  

\begin{figure}
    \minipage{0.3\textwidth}
    \includegraphics{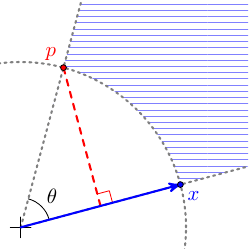}
    \caption{Using shortest point $x$ as coreset}
    \label{fig:shortest-point}
    \endminipage\hfill
    \minipage{0.3\textwidth}
    \includegraphics{"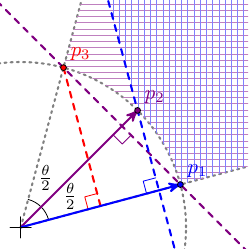"}
    \caption{merging two $(1-\cos\frac\theta 2)$-coresets}
    \label{fig:epsilon-coreset}
    \endminipage\hfill
    \minipage{0.3\textwidth}%
    \includegraphics{"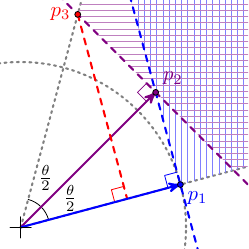"}
    \caption{merging two 0-coresets}
    \label{fig:0coreset}
    \endminipage
\end{figure}

\begin{theorem}
    \label{thm:shortest}
    Consider a point set $P$ with angular diameter $\theta \le \frac{\pi}{2}$. Let $x = \arg\min_{p \in P} \|p\|$, then $x$ is a $(1-\cos \theta)$-coreset of $P$.
\end{theorem}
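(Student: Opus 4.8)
The plan is to observe that since $x$ is a single point, $\conv(\{x\}) = \{x\}$, so to prove that $x$ is a $(1-\cos\theta)$-coreset it suffices to show that $x$ itself is an $(1-\cos\theta)$-approximation. Unwinding the definition of approximation with $\eps = 1-\cos\theta$ (so that $1-\eps = \cos\theta$), this amounts to proving the single inequality $\cos\theta \cdot \|x\| \le p|_x$ for every $p \in P$.

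First I would rewrite the projection in terms of the angle it makes with $x$. Letting $\alpha_p$ denote the angle between the vectors $p$ and $x$, we have $\langle p, x\rangle = \|p\|\,\|x\|\cos\alpha_p$, and hence $p|_x = \langle p, x\rangle / \|x\| = \|p\|\cos\alpha_p$. The target inequality therefore becomes $\cos\theta \cdot \|x\| \le \|p\|\cos\alpha_p$, which separates the two quantities I can control: the \emph{length} $\|p\|$ and the \emph{angle} $\alpha_p$.

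Next I would invoke two facts and combine them. Because $x$ is by construction the shortest point of $P$, we have $\|x\| \le \|p\|$ for all $p \in P$. Because both $p$ and $x$ lie in $P$, the angle $\alpha_p$ between them is at most the angular diameter, so $\alpha_p \le \theta \le \tfrac{\pi}{2}$; since cosine is nonnegative and decreasing on $[0,\tfrac{\pi}{2}]$, this yields $\cos\alpha_p \ge \cos\theta \ge 0$. With all factors nonnegative, chaining these gives $\|p\|\cos\alpha_p \ge \|x\|\cos\alpha_p \ge \|x\|\cos\theta$, which is exactly the desired bound.

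I do not anticipate a serious obstacle, as the argument is driven entirely by the identity $p|_x = \|p\|\cos\alpha_p$ followed by the two monotonicity observations. The only subtlety worth checking is that the projection $p|_x$ is well defined, i.e.\ that $\|x\| \neq 0$: this is guaranteed because the hypothesis $\theta \le \tfrac{\pi}{2}$ confines $P$ to a pointed cone, so the origin lies strictly outside $\conv(P)$ and every point of $P$ is nonzero.
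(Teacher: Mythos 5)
Your proposal is correct and matches the paper's own argument: both rest on the two observations that $\theta \le \frac{\pi}{2}$ gives $\cos\theta \le p|_x/\|p\|$ (your $p|_x = \|p\|\cos\alpha_p \ge \|p\|\cos\theta$) and that $\|x\| \le \|p\|$ by choice of $x$, chained together with nonnegativity. Your explicit treatment of the angle $\alpha_p$ and of why $x \neq 0$ just spells out what the paper states tersely.
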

\begin{proof}
    This follows almost directly from the definition. The assumption $\theta \le \frac{\pi}{2}$ implies $x \ne 0$ and $\cos \theta \le p|_x / \|p\|, \forall p \in P$. Then $\cos \theta \|x\| \le p|_x$ is immediate since $\|x\| \le \|p\|$; see \Cref{fig:shortest-point}. Thus $x$ is a $(\cos\theta)$-approximation and also is a $(1 - \cos\theta)$-coreset of $P$.
\end{proof}

\subparagraph{Hardness of merging.}
We next show this simple $\eps = 1-\cos \theta$ bound cannot be significantly improved.  In particular, merging coresets with smaller error may obtain this error (Theorem \ref{thm:coreset}) and even merging $0$-error coresets may result in nearly this much error (Theorem \ref{thm:0coreset}). 

\begin{theorem}
    \label{thm:coreset}
    Consider a point set $P$ of angular diameter $\theta \le\frac{\pi}{2}$.  Decompose $P$ into $P_1$ and $P_2$.  There exists such a setting where 
    (1) $S_1$ is a $(1-\cos \frac{\theta}{2})$-coreset of $P_1$, 
    (2) $S_2$ is a $(1-\cos \frac{\theta}{2})$-coreset of $P_2$, 
    (3) $S$ is a $(1-\cos \frac{\theta}{2})$-coreset of $S_1 \cup S_2$, but 
    (4) $S$ is no better than a $(1-\cos \theta)$-coreset of $P$.
\end{theorem}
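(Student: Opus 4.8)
The plan is to realize all four conditions with a single explicit planar configuration of unit vectors and to drive the degradation entirely through a \emph{failure of transitivity} of the covering relation ``$y$ is a $(1-\cos\frac{\theta}{2})$-approximation of $\{y,z\}$.'' Working in $\mathbb{R}^2$, I place unit vectors at angles $0$, $\frac{\theta}{2}$, and $\theta$: let $a$ be at angle $0$, let $m_1,m_2$ be two coincident copies at angle $\frac{\theta}{2}$, and let $e$ be at angle $\theta$, all of length $1$. Since the extreme pair $a,e$ subtends exactly $\theta$ and $\theta\le\frac{\pi}{2}$, the set $P=\{a,m_1,m_2,e\}$ has angular diameter $\theta$. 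I decompose $P$ as $P_1=\{a,m_1\}$ and $P_2=\{m_2,e\}$, each of angular diameter $\frac{\theta}{2}$.

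First I would take the single innermost surviving points as sub-coresets: $S_1=\{a\}$ and $S_2=\{m_2\}$. Because $a,m_1$ and $m_2,e$ each subtend $\frac{\theta}{2}$ and are equal length, a direct computation gives $m_1|_a=\cos\frac{\theta}{2}\,\|a\|$ and $e|_{m_2}=\cos\frac{\theta}{2}\,\|m_2\|$, so (as the equality case of \Cref{thm:shortest}) both $S_1$ and $S_2$ are genuine $(1-\cos\frac{\theta}{2})$-coresets, settling (1) and (2). Next, $S_1\cup S_2=\{a,m_2\}$ again subtends $\frac{\theta}{2}$, and taking $S=\{a\}$ the same identity $m_2|_a=\cos\frac{\theta}{2}\,\|a\|$ shows $S$ is a $(1-\cos\frac{\theta}{2})$-coreset of $S_1\cup S_2$, settling (3). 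Finally, for (4) I would compute $\min_{p\in P} p|_a/\|a\|$ directly: the binding point is $e$, with $e|_a=\cos\theta\,\|a\|$, while $m_1,m_2$ and $a$ contribute $\cos\frac{\theta}{2}$ and $1$; hence $a$ is exactly a $(\cos\theta)$-approximation of $P$ and no better. Since $\conv(S)=\{a\}$ is a single point, the only candidate approximation is $a$ itself, so $S$ is no better than a $(1-\cos\theta)$-coreset of $P$; see \Cref{fig:epsilon-coreset}.

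The heart of the matter, and the step I expect to require the most care, is recognizing that the surviving coreset $S$ must collapse to a \emph{single} extreme point for the bad bound to be attained. If one instead kept a two-point chord symmetric about the central direction, its optimal point would aim centrally and would cover every point of $P$ to within $\cos\frac{\theta}{2}$ rather than $\cos\theta$; indeed one can check that any retained point lying within angle $\frac{\theta}{2}$ of the hard point $e$ already forces approximation quality at least $\cos\frac{\theta}{2}$, so no such point may survive. This is exactly why the construction lines the points up in a single angular progression $0,\frac{\theta}{2},\theta$: the two $\frac{\theta}{2}$ covering gaps are individually within tolerance yet compose to a full $\theta$, and the lone survivor $a$ sits a full angle $\theta$ from the point $e$ it is forced to represent. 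The remaining checks are the three projection identities above and the verification that each part has angular diameter $\frac{\theta}{2}$, all of which are routine.
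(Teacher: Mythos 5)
Your proposal is correct and uses essentially the same construction as the paper: three equal-length directions at angles $0$, $\frac{\theta}{2}$, $\theta$, with the error compounding through two chained $\frac{\theta}{2}$-gaps. The only cosmetic difference is your decomposition (duplicating the middle point so both $P_1$ and $P_2$ contain two points), whereas the paper takes $P_1 = \{p_2, p_3\}$ and the singleton $P_2 = \{p_1\}$; the mechanism and all verifications are otherwise identical.
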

\begin{proof}
    We prove this existence by an example. Let $P$ include 3 points, $p_1$, $p_2$, and $p_3$. Such that $\|p_1\| = \|p_2\| = \|p_3\|$, $\angle(p_1, p_2) = \angle(p_2, p_3) = \theta/2$, and $\angle(p_1, p_3) = \theta$; see Figure \ref{fig:epsilon-coreset}. Then for $P_1 = \{p_2, p_3\}$, $S_1 = \{p_2\}$ is a valid $\left(1 - \cos\frac\theta 2\right)$-coreset.  For $P_2 = \{p_1\}$ then clearly $S_1 = \{p_1\}$ is a valid $\left(1 - \cos\frac\theta 2\right)$-coreset.  Now let $S = \{p_1\}$, so that $S$ is a valid $\left(1 - \cos\frac\theta 2\right)$-coreset of $S_1 \cup S_2 = \{p_2, p_1\}$.   However, $\frac{p_3|_{p_1}}{\|p_1\|} = \cos\theta$.  Therefore $S$ is not better than $\left(1 - \cos \theta \right)$-coreset of $P$.
\end{proof}

\begin{theorem}
    \label{thm:0coreset}
    Consider a point set $P$ of angular diameter $\theta \le\frac{\pi}{2}$.  Decompose $P$ into $P_1$ and $P_2$.  There exists a setting where 
    (1) $S_1$ is a $0$-coreset of $P_1$, 
    (2) $S_2$ is a $0$-coreset of $P_2$, 
    (3) $S$ is a $0$-coreset of $S_1 \cup S_2$, but 
    (4) $S$ is no better than a $\left( \frac{1 -\cos\theta}{1 + \cos\theta}\right)$-coreset of $P$.
\end{theorem}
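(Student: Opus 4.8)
The plan is to mirror the three-point configuration of \Cref{thm:coreset}, keeping the same angular arrangement but rescaling the norms geometrically so that each single-point coreset becomes \emph{exact}. Concretely, I would take $P=\{p_1,p_2,p_3\}$ with $\angle(p_1,p_2)=\angle(p_2,p_3)=\frac\theta2$ and $\angle(p_1,p_3)=\theta$ exactly as before, but now set $\|p_3\|=1$, $\|p_2\|=\cos\frac\theta2$, and $\|p_1\|=\cos^2\frac\theta2$, so that consecutive norms shrink by the factor $\cos\frac\theta2$. The decomposition and coresets are the analogues of \Cref{thm:coreset}: $P_1=\{p_2,p_3\}$ with $S_1=\{p_2\}$, $P_2=\{p_1\}$ with $S_2=\{p_1\}$, and $S=\{p_1\}$ taken as a coreset of $S_1\cup S_2=\{p_1,p_2\}$; see \Cref{fig:0coreset}.

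Recall that a single point $\{x\}$ is a $0$-coreset of a set $Q$ exactly when $\|x\|^2\le\langle q,x\rangle$ for every $q\in Q$, i.e.\ $x$ is the closest point of $\conv(Q)$. I would verify the three $0$-coreset conditions by checking this inequality, and the point of the norm choice is that each holds with equality. For $S_1$ the only nontrivial check is $\|p_2\|^2\le\langle p_3,p_2\rangle$; since $\langle p_3,p_2\rangle=\|p_3\|\,\|p_2\|\cos\frac\theta2=\cos^2\frac\theta2=\|p_2\|^2$, it holds with equality, so $p_2$ is exactly the optimum of $P_1$. The condition for $S_2$ is immediate. For $S$ the check is $\|p_1\|^2\le\langle p_2,p_1\rangle$, and again $\langle p_2,p_1\rangle=\|p_2\|\,\|p_1\|\cos\frac\theta2=\cos^4\frac\theta2=\|p_1\|^2$, so $p_1$ is exactly the optimum of the merged set $\{p_1,p_2\}$. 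Thus all three are valid $0$-coresets, giving conditions~(1)--(3).

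Finally I would bound the quality of $S=\{p_1\}$ on all of $P$. Since $\conv(S)$ is the single point $p_1$, its error is governed by the worst-projected point, namely the dropped extreme $p_3$: I compute $\frac{p_3|_{p_1}}{\|p_1\|}=\frac{\langle p_3,p_1\rangle}{\|p_1\|^2}=\frac{\cos\theta}{\cos^2\frac\theta2}=1-\tan^2\frac\theta2$, where the last identity uses $\cos\theta=2\cos^2\frac\theta2-1$. Hence $p_1$ is no better than a $\tan^2\frac\theta2=\frac{1-\cos\theta}{1+\cos\theta}$-approximation of $P$, which is condition~(4). The only real difficulty is discovering the correct scaling: the geometric factor $\cos\frac\theta2$ is precisely what makes each \emph{retraction to the local optimum} exact, so that all three $0$-coreset conditions can hold simultaneously while the merged coreset still collapses onto $p_1$ and loses the extreme direction $p_3$. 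The remaining work is the routine trigonometric simplification above.
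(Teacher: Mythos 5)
Your proposal is correct and is essentially the paper's own proof: the paper defines the same configuration via the projection conditions $p_2|_{p_1}=\|p_1\|$ and $p_3|_{p_2}=\|p_2\|$, which are exactly your geometric scaling $\|p_3\|=1$, $\|p_2\|=\cos\frac{\theta}{2}$, $\|p_1\|=\cos^2\frac{\theta}{2}$, with the same decomposition, the same choices of $S_1$, $S_2$, $S$, and the same final computation $\frac{p_3|_{p_1}}{\|p_1\|}=1-\frac{1-\cos\theta}{1+\cos\theta}$. The only difference is that you verify the three $0$-coreset conditions explicitly (via $\|x\|^2\le\langle q,x\rangle$), which the paper asserts without calculation.
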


\begin{proof}
    The proof is similar with the one of \cref{thm:coreset}. Let $P$ include 3 points, $p_1$, $p_2$, and $p_3$. Such that $p_2|_{p_1} = \|p_1\|$ and $p_3|_{p_2} = \|p_2\|$, also $\angle(p_1, p_2) = \angle(p_2, p_3) = \theta/2$, and $\angle(p_1, p_3) = \theta$; see \cref{fig:0coreset}. Then $S_1 = \{p_2\}$ is a 0-coreset of $P_1 = \{p_2, p_3\}$, and $S_2 = P_2 = \{p_1\}$ is a $0$-coreset for $P_2$.  Then $S = \{p_1\}$ is a $0$-coreset for $S_1 \cup S_2$.  However $\frac{p_3|_{p_1}}{\|p_1\|} = 1 - \frac{1 -\cos\theta}{1 + \cos\theta}$. Therefore $S$ can be no better than a $\left( \frac{1 -\cos\theta}{1 + \cos\theta}\right) $-coreset of $P$.
\end{proof}

\bibliography{references}

\begin{thebibliography}{1}

\bibitem{ACHPWY}
Pankaj~K. Agarwal, Graham Cormode, Zengfeng Huang, Jeff~M. Phillips, Zhewei Wei, and Ke~Yi.
\newblock {Mergeable summaries}.
\newblock In {\em PODS}, 2012.

\bibitem{Sharath+Agarwal}
Pankaj~K. Agarwal and R~Sharathkumar.
\newblock {Streaming Algorithms for Extent Problems in High Dimensions}.
\newblock In {\em SODA}, 2010.

\bibitem{Burges1998}
Christopher J~C Burges.
\newblock {A Tutorial on Support Vector Machines for Pattern Recognition}.
\newblock {\em Data mining and knowledge discovery}, 2(2):121--167, 1998.

\bibitem{Gartner2009}
Bernd G{\"{a}}rtner and Martin Jaggi.
\newblock {Coresets for polytope distance}.
\newblock In {\em SOCG}, 2009.

\end{thebibliography}
	
\end{document}